\newcommand{\remove}[1]{{}}
\newcommand\NP{\textrm{NP}}
\newcommand\msize[1]{{\left|{#1}\right|}}
\newtheorem{observation}[theorem]{Observation}
\newcommand{\Height}{\textsf{M{\scriptsize IN}H{\scriptsize EIGHT}}}
\newcommand{\Sum}{\textsf{M{\scriptsize IN}S{\scriptsize UM}CW}}
\let\epsilon=\varepsilon
\long\def\@caption#1[#2]#3{\par\addcontentsline{\csname
  ext@#1\endcsname}{#1}{\protect\numberline{\csname 
  the#1\endcsname}{\ignorespaces #2}}\begingroup
    \@parboxrestore
    \small
    \@makecaption{\csname fnum@#1\endcsname}{\ignorespaces #3}\par
  \endgroup}
\begin{document}
\title{Folding a Paper Strip to Minimize Thickness\thanks{
This research was performed in part at the 29th Bellairs Winter Workshop on
Computational Geometry. }
\thanks{
Erik Demaine was supported in part by NSF ODISSEI grant EFRI-1240383 and
NSF Expedition grant CCF-1138967. 
David Eppstein was supported in part by 
NSF grant 1228639 and ONR grant N00014-08-1-1015.
Adam Hesterberg was supported in part by DoD, Air Force Office of Scientific Research, National Defense Science and Engineering Graduate (NDSEG) Fellowship, 32 CFR 168a.
Hiro Ito was supported in part by 
JSPS KAKENHI Grant Number 24650006 and 
MEXT KAKENHI Grant Number 24106003. 
Anna Lubiw was supported in part by NSERC.
Ryuhei Uehara was supported in part by JSPS KAKENHI Grant Number 23500013 and 
MEXT KAKENHI Grant Number 24106004.
Yushi Uno was supported in part by KAKENHI Grant numbers 23500022 and 25106508.}}

\author{
Erik D. Demaine\inst{1}
\and
David Eppstein\inst{2}
\and
Adam Hesterberg\inst{3}
\and
Hiro Ito\inst{4}
\and\\
Anna Lubiw\inst{5}
\and
Ryuhei Uehara\inst{6}
\and
Yushi Uno\inst{7}
}

\institute{
Computer Science and Artificial Intelligence Lab, Massachusetts Institute of Technology, 
USA. 
\email{edemaine@mit.edu}
\and
Computer Science Department, University of California, Irvine, USA. 
\email{eppstein@uci.edu}
\and
Department of Mathematics, Massachusetts Institute of Technology, 
USA. 
\email{achester@mit.edu}
\and 
School of Informatics and Engineering, University of Electro-Communications, Japan. 
\email{itohiro@uec.ac.jp}
\and
David R. Cheriton School of Computer Science, University of Waterloo, Canada. 
\email{alubiw@uwaterloo.ca}
\and 
School of Information Science, Japan Advanced Institute of Science and Technology, 
Japan. 
\email{uehara@jaist.ac.jp}
\and
Graduate School of Science, Osaka Prefecture University, Japan. 
\email{uno@mi.s.osakafu-u.ac.jp}
}
\maketitle

\begin{abstract}
In this paper, we study how to fold a specified origami crease pattern in
order to minimize the impact of paper thickness.  Specifically,
origami designs are often expressed by a mountain-valley pattern
(plane graph of creases with relative fold orientations), but in general
this specification is consistent with exponentially many possible folded states.
We analyze the complexity of finding the best consistent folded state
according to two metrics: minimizing the total number of layers in the
folded state (so that a ``flat folding'' is indeed close to flat), and
minimizing the total amount of paper required to execute the folding
(where ``thicker'' creases consume more paper).
We prove both problems strongly \NP-complete even for 1D folding.
On the other hand, we prove the first problem fixed-parameter tractable
in 1D with respect to the number of layers.
\end{abstract}

\setcounter{footnote}{0}

\section{Introduction}

Most results in computational origami design assume an idealized,
zero-thickness piece of paper.  This approach has been highly successful,
revolutionizing artistic origami over the past few decades.  Surprisingly
complex origami designs are possible to fold with real paper thanks in part to
thin and strong paper (such as made by Origamido Studio) and perhaps also to some
unstated and unproved properties of existing design algorithms.

This paper is one of the few attempts to model and optimize the effect of
positive paper thickness.  Specifically, we consider an origami design
specified by a \emph{mountain-valley pattern}
(a crease pattern plus a mountain-or-valley assignment for each crease),
which in practice is a common specification for complex origami designs.
Such patterns only partly specify a folded state, which also consists of
an \emph{overlap order} among regions of paper.
In general, there can be exponentially many overlap orders consistent with
a given mountain-valley pattern.
Furthermore, it is \NP-hard to decide flat foldability of a mountain-valley
pattern, or to find a valid flat folded state (overlap order) given the promise
of flat foldability \cite{Bern-Hayes-1996}.
But for 1D pieces of paper, the same problems are polynomially solvable
\cite{Arkin-Bender-Demaine-Demaine-Mitchell-Sethia-Skiena-2003,DemaineORourke2007},
opening the door for optimizing the effects of paper thickness among the
exponentially many possible flat folded states---the topic of this paper.

\subsubsection{Preceding Research}

One of the first mathematical studies about paper thickness is also primarily about
1D paper.  Britney Gallivan \cite{Gallivan}, as a high school student, modeled and analyzed the
effect of repeatedly folding a positive-thickness piece of paper in half.
Specifically, she observed that creases consume a length of paper
proportional to the number of layers they must ``wrap around'',
and thereby computed the total length of paper (relative to the paper
thickness) required to fold in half $n$ times.
She then set the world record by folding a 4000-foot-long piece of (toilet)
paper in half twelve times, experimentally confirming her model and analysis.

Motivated by Gallivan's model,
Uehara~\cite{Uehara2010b} defined the \emph{stretch} at a crease to be the
number of layers of paper in the folded state that lie between the two paper
segments hinged at the crease.
We will follow the terminology of
Umesato et al.~\cite{UmesatoSaitohUeharaItoOkamoto2013} who later replaced the
term ``stretch'' with \emph{crease width}, which we adopt here.
Both papers considered the case of a strip of paper with \emph{equally spaced}
creases but an arbitrary mountain-valley assignment.
When the mountain-valley assignment is uniformly random, its expected number of consistent folded states is $\Theta(1.65^n)$ \cite{Uehara2011b}.
Uehara~\cite{Uehara2010b} asked whether it is NP-hard, for a given
mountain-valley assignment, to minimize the maximum crease width or
to minimize the total crease width (summed over all creases).
Umesato et al.~\cite{UmesatoSaitohUeharaItoOkamoto2013} showed that the
first problem is indeed NP-hard, while
the second problem is fixed-parameter tractable.   
Also, there is a related study for a different model, 
which tries to compact orthogonal graph drawings to use minimum number 
of rows \cite{BannisterEppsteinSimons2012}. 

\subsubsection{Models}

We consider the problem of minimizing crease width in the more general
situation where the creases are not equally spaced along the strip of paper.  
This more general case has some significant differences with the equally spaced case.
For one thing, if the creases are equally spaced, all mountain-valley patterns
can be folded flat by repeatedly folding from the rightmost end;
in contrast, in the general case, some mountain-valley patterns (and even some
crease patterns) have no consistent flat folded state
that avoids self-intersection.
Flat foldability of a mountain-valley pattern can be checked in linear time
\cite{Arkin-Bender-Demaine-Demaine-Mitchell-Sethia-Skiena-2003} \cite[Sec.~12.1]{DemaineORourke2007}, but it requires a nontrivial algorithm.

For creases that are not equally spaced, the notion of crease width must also
be defined more precisely, because it is not so clear how to count the layers
of paper between two segments at a crease. 
For example, in Fig.~\ref{fig:cw}, 
although no layers of paper come all the way to touch the three creases
on the left, we want the sum of their crease widths to be $100$. 

\begin{figure}[htb]
  \centering
  \includegraphics[width=0.6\textwidth]{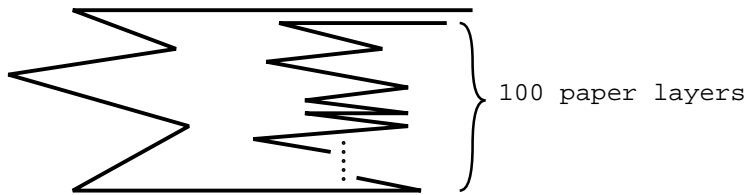}
  \caption{How can we count the paper layers?}
  \label{fig:cw}
\end{figure}

We consider a folded state to be an assignment of the segments to horizontal
\emph{levels} at integer $y$ coordinates, with the creases becoming vertical
segments of variable lengths.
See Fig.~\ref{fig:models} and the formal definition below.
Then the \emph{crease width} at a crease is simply the number of levels 
in between the levels of the two segments of paper joined by the crease. That is,
it is one less than the length of the vertical segment assigned to the crease.
This definition naturally generalizes the previous definition for
equally spaced creases.
Analogous to Uehara's open problems~\cite{Uehara2010b},
we will study the problems of minimizing the maximum crease width and
minimizing the total crease width for a given mountain-valley pattern.
The total crease width corresponds to the extra length of paper needed to fold
the paper strip using paper of positive thickness, naturally generalizing
Gallivan's work\footnote{
Although we assume orthogonal bends
  in this paper, while Gallivan measures turns as circular arcs, this changes the length by only a constant factor. Gallivan's model seems
  to correspond better to practice.}
\cite{Gallivan}.

\begin{figure}[htb]
  \centering
  \includegraphics[width=0.9\textwidth]{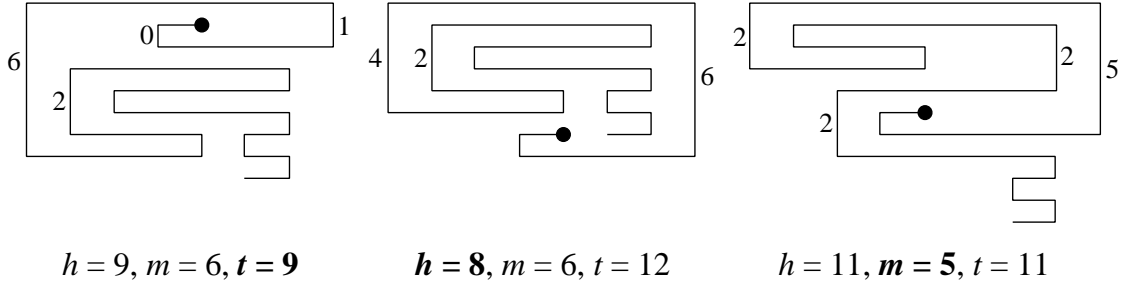}
\caption{Three different folded states of the crease pattern $VMVMVVMMMM$ (ending at the dot).
The crease width of each crease is given beside its corresponding vertical segment.
Each folding is better than the other two in one of the three measures, where $h$ is the height, $m$ is the maximum crease width, and $t$ is the total crease width.  
}
\label{fig:models}
\end{figure}

In the setting where creases need not be equally spaced, there is another
sensible measure of thickness: the \emph{height} of the folded state is
the total number of levels.  The height is always $n+1$ for $n$ equally spaced
creases, but in our setting different folds of the same crease pattern can have different heights.  
Figure~\ref{fig:models} shows how the
three measures can differ.  Of course, the maximum crease width is always
less than the height.

\subsubsection{Contributions}

Our main results (Section~\ref{NP-completeness})
are \NP-hardness of the problem of minimizing height
and the problem of minimizing the total crease width.
See Table~\ref{tbl:variations}.
In addition, we show in Section~\ref{Fixed-parameter tractability}
that the problem of minimizing height is fixed-parameter
tractable, by giving a dynamic programming algorithm that runs in
$O(2^{O(k \log k)}n)$ time, where $k$ is the minimum height. 
This dynamic program can be adapted to minimize maximum crease width
or total crease width for foldings of bounded height,
with the same time complexity as measured in terms of the height bound.
Table~\ref{tbl:variations} summarizes related results. 

\begin{table}[ht]
\centering
\tabcolsep=0.5em
\def\arraystretch{1.2}
\begin{tabular}{c|cc}\hline
\textbf{thickness measure}&\textbf{eq. spaced creases}&\textbf{~~general creases~~}\\
\hline
~~height~~&trivial&\NP-hard (this paper)\\
& & FPT wrt. min height (this paper)\\
~~max crease width~~&\NP-hard~\cite{UmesatoSaitohUeharaItoOkamoto2013} &
\llap{$\implies$\qquad} \NP-hard~\cite{UmesatoSaitohUeharaItoOkamoto2013}\\
~~total crease width~~&open&\NP-hard (this paper)\\ \hline
\end{tabular}
\medskip
\caption{Complexity of minimizing thickness, by model, for the case of equally spaced creases and for the general case.}
\label{tbl:variations}
\end{table}


\remove{
They introduced a 
notion of ``crease width at a crease''
and considered the problem of minimizing the maximum crease width 
for a given paper strip with creases at regular intervals.
They consider only the case when the creases are equally spaced,
which puts every crease at either the left end or the right end of the folded state,
and they focus on the local minimum at each crease in a folded state.
The problem is fundamentally different if the creases are allowed to be in arbitrary positions.
For example, if they are equally spaced, any mountain-valley pattern can be folded flat 
by repeatedly folding from right-most end, but in the general case, some patterns cannot be 
folded flat without the paper intersecting itself.
Flat foldability of a paper strip can be checked in linear time \cite[Sec.~12.1]{DemaineORourke2007}, 
but it is not trivial.

If the paper strip can be folded flat with a given mountain-valley pattern,
how can we estimate the thickness of a folded state?
Also, how can we find the thinnest folded state?
We consider this new problem in this paper.
Unlike the equally-spaced case, 
we have to consider the global structure of a folded state,
and hence we meet different problems.
In this paper, we focus on this global minimum thickness problem.

We first introduce two notions, of the ``thickness'' and ``height'' of each point in a folded state.
The thickness of a point is the number of paper layers at that point.
When we fix a mountain-valley pattern, the thickness is invariant 
and can be computed in linear time. 
On the other hand, the height will be a realistic measure of thickness of the folded state,
dependent on the folding, and we can look for the best foldings with 
respect to that measure of height.
First we show that the problem of minimizing the maximum height of 
a paper strip for a given mountain-valley pattern of length $n$ is strongly \NP-complete in general.
Hence we cannot solve the problem in polynomial time unless P=NP.

In \cite{UmesatoSaitohUeharaItoOkamoto2013}, Umesato et al.~defined 
the notion of crease width at a crease
by the number of paper layers at that crease,
They considered the maximum and total crease width for all creases
and showed 
that minimizing maximum crease width was \NP-complete,
but left open the computational complexity of minimizing total crease width.
We first mention that the notion of crease width at a crease depends on the fact that their paper strip
 has creases at regular intervals and all creases are piled at 
the two endpoints of a folded state, so the number of paper layers at the crease is a reasonable measure of its size.
In general case, we cannot use that notion.
For example, in Fig.~\ref{fig:cw}, 
we have to put a hundred paper layers between two long paper segments,
and thus they are put between the three leftmost creases.
In the original notion of the crease width, these three creases have crease width 0
since there are no paper layers at each of them.
However, we'll define crease width in such a way that the
total crease width at the left side of this folded state is 100,
using the paper layers between each pair of the paper segments of the paper strip.
This is a natural extension of the original notion of total crease width,
and this is also similar to an idea of Gallivan's to estimate the length of toilet paper 
needed to fold it in half twelve times \cite{Gallivan}.
The reduction in the proof of our main theorem can be used to show the \NP-completeness 
of this total crease width minimization problem, 
answering the open question in \cite{UmesatoSaitohUeharaItoOkamoto2013}.

Then we propose a nontrivial dynamic programming algorithm that solves the minimization problem.
It runs in $O(2^{O(h \log h)}n)$ time, where $h$ is the maximum height of the folded state.
Thus the problem is fixed parameter tractable with respect to the maximum height.
}

\section{Preliminaries}
\label{sec:preliminaries}

We model a {\em paper strip} as a one-dimensional line segment.
It is rigid except at {\em creases} $p_1,p_2,\ldots,p_n$ on it;
that is, we are allowed to fold only at these crease points.
For notational convenience, the two ends of the paper strip are denoted by $p_0$ and $p_{n+1}$.
We are additionally given a {\em mountain-valley string} 
$s=s_1s_2\cdots s_n$ in $\{M,V\}^n$.
In the \emph{initial state} the paper strip is placed on the $x$-axis, with each crease $p_i$ at a given coordinate 
$x_i$. 
Without loss of generality, we assume that 
$x_0 = 0 < x_1 < \cdots < x_n < x_{n+1}$. 
Sometimes we will normalize so $x_{n+1} = 1$. 
We may consider the paper strip as a sequence of $n+1$ 
{\em segments} $S_i$ of length $x_{i+1} - x_i$ 
delimited by the creases $p_i$ and $p_{i+1}$ for each $i \in \{0,1,\ldots,n\}$.
We fold the strip through two dimensions, so we distinguish the {\it top} side of the strip (the positive $y$ side) and the {\it bottom} side of the strip (the negative $y$ side).
Each crease's letter determines how we can fold it:
when it is $M$ (resp. $V$), the two paper segments sharing the crease are folded in 
the direction such that their bottom sides (resp. top sides) are close to touching 
(although they may not necessarily touch if they have other paper layers between them).

Following Demaine and O'Rourke~\cite{DemaineORourke2007}
we define a \emph{flat folding} (or \emph{folded state})
via the relative stacking order of collocated layers of paper.
We begin with $x_0$ at the origin, and the first segment lying in the positive $x$-axis. 
The lengths of the segments determine where each segment lies along the $x$-axis (because they zig-zag).  Suppose that point $p_i$ is mapped to $x$-coordinate $f(p_i)$.  
The mountain-valley assignment determines for each segment $S_i$ whether $S_i$ lies above or below $S_{i+1}$. 
We extend this to specify the 
relative vertical order of any two segments that overlap horizontally.
This defines a \emph{folded state} so long as the vertical ordering of segments is transitive and \emph{non-crossing}. More formally:
\begin{enumerate}
\item if segments $S_i$ and $S_{i+1}$ are joined by a crease at $x$-coordinate $f(p_i)$  then for any segment $S$ that extends to the left and the right of $f(p_i)$, either $S <S_i, S_{i+1}$ or $S > S_i, S_{i+1}$, 
\item if segments $S_i$ and $S_{i+1}$ are joined by a crease at $x$-coordinate $f(p_i)$, segments $S_j$ and $S_{j+1}$ are joined by a crease at the same $x$-coordinate $f(p_j) = f(p_i)$, and all 4 segments extend to the same side of the crease, then the two creases do not \emph{interleave}, i.e., we do not have $A < B < A' < B'$ where $A$ and $A'$ are one of the pairs joined at a crease and $B$ and $B'$ are the other pair.
\end{enumerate}
 
When the $x_i$'s are not equally spaced, the paper strip cannot necessarily be folded
flat with the given mountain-valley assignment.  For example, segments of lengths 2, 1, 2 do not allow the assignment $VV$.
There is a linear time algorithm to test whether an assignment has a flat folding~\cite{DemaineORourke2007}.

In order to define crease width, we will use an enhanced notion of folded states:
a \emph{leveled folded state} is an assignment of the segments to \emph{levels} from the set $\{1, 2, \ldots \}$
such that the resulting vertical ordering of segments is a valid folded state.  See Fig.~\ref{fig:models}.
We can draw a leveled folded state as a rectilinear path of alternating horizontal and vertical segments, where the horizontal segments are the given ones, and the vertical segments (which represent the creases) have variable lengths.

Clearly a leveled folded state provides a folded state, but in the reverse direction, a folded state may correspond to many leveled folded states.  However, for the measures we are concerned with, we can efficiently compute the best leveled folded state corresponding to any folded state.

The \emph{height} of a leveled folded state is the number of levels used.  Given a folded state, the minimum height of any corresponding leveled folded state can be computed efficiently, since it is the length of a longest chain in the partial order defined on the segments in the folded state.

The \emph{crease width} of a crease in a leveled folded state is the number of levels in-between the two segments joined at the crease.  We are interested in minimizing the maximum crease width and in minimizing the total crease width, i.e., the sum of the crease widths of all the creases. 
In both cases, given a folded state, we can compute the best corresponding leveled folded state using linear programming.

\remove{
\begin{figure}
  \centering
  \includegraphics[width=0.6\textwidth]{define}
  \caption{Height and thickness of a folded paper strip.}
  \label{fig:define}
\end{figure}

Now we introduce a new notion of the {\em height} and {\em thickness} of a folded state of the paper strip.
Let the folded paper strip be placed on the $x$-axis.
Then, intuitively, each paper segment is supposed to be flat and rigid in a folded state.
Therefore, if the segment is supported by some different segments from below,
the maximum of their heights determines its {\em height}.
On the other hand, we usually consider the {\em thickness} of a layered paper strip
to be the maximum number of paper layers at a point.
For example, Fig.~\ref{fig:define} gives two folded states for a paper strip 
with the same mountain-valley assignment.
It is easy to see that the folded states have the same maximum thickness, 5,
regardless of how they are folded, but the heights depend on it:
one folded state (a) has height 6 and the other one (b) has height 8.
Precisely, we imagine that we first place the folded paper strip on the $x$-axis.
Then, some paper segments directly touch the $x$-axis; we define them to have height 0.
Each other paper segment $p_ip_{i+1}$ is on at least one 
other paper segment $p_{j}p_{j+1}$. Then the {\em height} 
$h(p_ip_{i+1})$ of $p_ip_{i+1}$ is defined 
by $\max\{ h(p_jp_{j+1})\}+1$, where $j$ ranges over all paper segments $p_j p_{j+1}$ 
directly touching $p_i p_{i+1}$ at some point between $p_i$ and $p_{i+1}$.
Note that these segments are considered as open intervals. 
That is, if $p_j p_{j+1}$ touches $p_{i} p_{i+1}$ only at their ends, 
they are independent.
For a mountain-valley string $s$, we call a folded state {\em legal with respect to $s$} if it follows the string.
}

A mountain-valley string that alternates $MVMVMV\ldots$ is called a {\em pleat}.
For equally-spaced creases, the legal folded state is unique (up to reversal of the paper) 
if and only if $s$ is a pleat \cite{Uehara2010b,Uehara2011b}.

In this paper, we consider three versions of minimizing thickness 
in a flat folding.  
For all three problems we have the following instance in common: 

\smallskip
\noindent
INSTANCE: 
A paper strip $P$, with creases $p_1,\ldots,p_n$ at positions $x_1, \ldots, x_n$
with a mountain-valley string $s \in \{M, V\}^n$, and a natural number $k$. 

\medskip
The questions of the three problems are as follows:

\medskip
\noindent
{\bf\textsf{MinHeight}}: Is there a leveled folded state of height at most $k$?

\smallskip
\noindent
{\bf\textsf{MinMaxCW}}: Is there a leveled folded state with maximum crease width at most $k$?

\smallskip
\noindent
{\bf\textsf{MinSumCW}}: Is there a leveled folded state with total crease width at most $k$?


\remove{
In this paper, we consider the\Height{} problem of minimizing the maximum height of a folded state, as follows:

\smallskip\noindent
{\bf INSTANCE}: A paper strip $P$, creases $p_1,\ldots,p_n$ on $P$
with a mountain-valley string $s \in \{M, V\}^n$, and a natural number $k$.\\
{\bf QUESTION}: Is there a legal folded state with respect to $s$ of maximum height at most $k$?
\smallskip

Given a point $x$ on the $x$-axis in a folded state, 
the layers of segments at $x$ can be represented by 
the ordering of the segments in the increasing order with respect to $y$;
for example, the lefts part of thickness 5 in Fig.~\ref{fig:define}(a) 
or (b)
are $[S_0|S_1|S_2|S_3|S_4]$,
the right part of thickness 5 in Fig.~\ref{fig:define}(a) is 
$[S_7|S_8|S_4|S_5|S_6]$, and
the right part of thickness 5 in Fig.~\ref{fig:define}(b) is $[S_4|S_5|S_7|S_8|S_6]$.
}

\section{\NP-completeness}
\label{NP-completeness}

In this section, we show \NP-completeness of 
the {\sf MinHeight} and {\sf MinSumCW} problems. 
We remind the reader that the pleat folding has  a unique folded state \cite{Uehara2010b,Uehara2011b}.
We borrow some useful ideas from \cite{UmesatoSaitohUeharaItoOkamoto2013}.

\begin{figure}
   \centering
   \includegraphics[width=0.48\textwidth]{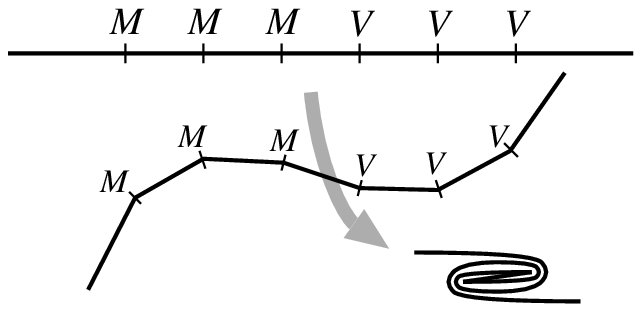}
   \caption{The unique flat folding of the string MMMVVV.}
   \label{fig:mmm}
\end{figure}


\begin{observation}
\label{obs:spiral}
Let $n$ be a positive integer,
$P$ be a strip with creases $p_1,\ldots,p_{2n}$,
and $s$ be a mountain-valley string $M^nV^n$.
We suppose that the paper segments are of equal length except a longer one at each end.
Precisely, we have $\msize{S_i}=\msize{S_j} < \msize{S_0} = \msize{S_{2n}}$ for all $i,j$ with $0<i,j<2n$,
where $\msize{S_i}$ denotes the length of the segment $S_i$,
Then the legal folded state with respect to $s$ is unique up to reversal of the paper.
Precisely, the legal folded state
has the segments in vertical order  $S_0,S_{2n-1},S_2,S_{2n-3}, \dots,S_{2i},S_{2(n-i)-1},\dots,S_1,S_{2n}$ or the reverse.
\end{observation}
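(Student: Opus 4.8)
The plan is to reduce the claim to a purely combinatorial statement about a single linear order, and then to prove uniqueness of that order by induction, peeling off the two long end segments.

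First I would fix the horizontal layout. Since a fold reverses the direction of travel along the $x$-axis at every crease regardless of its label, the image $f(p_i)$ of each crease is determined by the segment lengths alone. Writing $L=\msize{S_0}=\msize{S_{2n}}$ and normalizing the middle segments to unit length, a direct computation gives $f(p_{2k-1})=L$ and $f(p_{2k})=L-1$, so every middle segment $S_1,\dots,S_{2n-1}$ projects exactly onto the interval $[L-1,L]$, while $S_0=[0,L]$ and $S_{2n}=[L-1,2L-1]$ strictly contain this interval and protrude to the left and right respectively. Consequently all $2n+1$ segments overlap over $(L-1,L)$, so in any folded state every pair of segments is comparable: a folded state is exactly a permutation (a total vertical order) of the $2n+1$ segments, and this permutation is the only freedom.

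Next I would read off the constraints on this order. The mountain--valley label of each crease, together with the orientation parity of its two segments (segment $S_i$ is ``face up'' iff $i$ is even), fixes the relative order of the two segments joined by that crease; this yields one order relation per consecutive pair but, crucially, does not by itself determine the order of non-adjacent segments. The non-crossing conditions supply the rest: the ``right'' creases at $x=L$ join the pairs $(S_{2k},S_{2k+1})$, and since all four endpoints of any two such creases lie to the left of $x=L$, condition~2 forces these pairs to be laminar (nested or disjoint) in the vertical order; symmetrically the ``left'' creases at $x=L-1$ make the pairs $(S_{2k-1},S_{2k})$ laminar. Finally, because $S_{2n}$ crosses the line $x=L$ it must, by condition~1, lie entirely outside every right pair, and likewise $S_0$ lies outside every left pair. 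These two ``outside'' conditions are exactly what the protruding long ends buy us.

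I would then prove the explicit order by induction on $n$. Existence is a routine check that the stated order meets every constraint above. For uniqueness, the key lemma is that $S_0$ and $S_{2n}$ must occupy the two extreme (top and bottom) levels of the stack. Granting this, removing them leaves the segments $S_1,\dots,S_{2n-1}$: the residual right and left pairings are precisely those of the $M^{n-1}V^{n-1}$ instance (with left and right interchanged, matching the reversal), and, since the outermost pairs were those incident to $S_0$ and $S_{2n}$, the new ends $S_1$ and $S_{2n-1}$ inherit the outside conditions, so the inductive hypothesis applies. The base case $n=1$ is the three-segment chain, whose order is fixed outright by the labels $MV$.

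The main obstacle is the extremes lemma. Adjacency alone does not force $S_0$ to the top: the ``highest'' segment according to the naive height walk can be an interior segment, so the conclusion genuinely relies on the non-crossing structure. I expect to prove it by contradiction, showing that placing $S_0$ at an interior level forces the opposite long end $S_{2n}$ to fall strictly between the two segments of some right pair, violating its outside condition (and symmetrically); this is where both protruding ends are used simultaneously, and it also explains why equal-length ends would not yield a unique folding. The reversal in the statement corresponds exactly to the freedom of which long end is placed on top.
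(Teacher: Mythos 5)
Your combinatorial setup is correct and matches the paper's definitions exactly: the layout computation ($f(p_{2k-1})=L$, $f(p_{2k})=L-1$, every segment overlapping $(L-1,L)$, hence a total vertical order), the per-crease order constraints from the mountain-valley letters and segment parity, laminarity of the right pairs $(S_{2k},S_{2k+1})$ and left pairs $(S_{2k-1},S_{2k})$, and the two outside conditions ($S_{2n}$ outside every right pair, $S_0$ outside every left pair) are precisely what the paper's non-crossing conditions specialize to here. Your induction framework is also sound: once $S_0$ and $S_{2n}$ are known to sit at the extreme levels, laminarity against the pairs $(S_0,S_1)$ and $(S_{2n-1},S_{2n})$ does transfer the outside conditions to the new ends $S_1$ and $S_{2n-1}$, and the inherited per-crease constraints select the correct one of the two orders supplied by the inductive hypothesis. (Note the paper itself gives no proof at all---the Observation is justified only by a figure---so there is no ``paper's argument'' to compare against; your framework already goes beyond it.)

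The genuine gap is the extremes lemma, which you flag as ``the main obstacle'' and then only promise to prove; it is the entire mathematical content of the observation, and the one-sentence plan you give does not obviously close it. Nothing forces $S_{2n}$ ``strictly between the two segments of some right pair'' in one step: the per-crease constraints alone admit interior local maxima other than $S_0$ (any $S_j$ with $j$ even, $2\le j\le n-1$, or $j$ odd, $n+1\le j\le 2n-1$, is consistent with being the top as far as the crease rules go), so the contradiction must exploit the global structure of $M^nV^n$, not just the two protruding ends. A workable argument runs as follows: let $U$ be the set of segments strictly above $S_0$; by the outside condition on $S_0$, left pairs never straddle $S_0$, so membership in $U$ changes only at right creases; a right pair straddling $S_0$ must, by laminarity with $(S_0,S_1)$, have one element above $S_0$ and the other below $S_1$, and the crease rule then shows the strip can \emph{enter} $U$ only at a V-crease and \emph{exit} only at an M-crease. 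Since all M's precede all V's, $U$ is empty or a suffix $\{S_{j^*},\dots,S_{2n}\}$ with $j^*>n$; symmetrically, the set $D$ of segments strictly below $S_{2n}$ is empty or a prefix $\{S_0,\dots,S_{j^{**}}\}$ with $j^{**}<n$. If $U\neq\emptyset$ these collide: $S_{2n}\in U$ gives $S_0<S_{2n}$, so $S_0\in D$ and $D\neq\emptyset$, yet $S_{j^{**}+1}$ (index at most $n<j^*$) lies below $S_0<S_{2n}$ and so belongs to $D$, contradicting the prefix description. Hence $U=\emptyset$, and symmetrically $D=\emptyset$. Some argument of this kind---using that all mountains precede all valleys---is what your proposal is missing; without it, the induction never starts.
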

%
A simple example is given in Fig.~\ref{fig:mmm}.
We call this unique folded state the {\em spiral folding} of size $2n$.  

Our hardness proofs reduce from {\sf 3-PARTITION}, defined as follows. 

\begin{quote}
{\sf 3-PARTITION} (cf.~\cite{GJ79}) \\
Instance: 
A finite multiset $A=\{a_1,a_2,\ldots,a_{3m}\}$ of $3m$ positive integers.
Define $B = \sum_{j=1}^{3m} a_j / m$.
We may assume each $a_j$ satisfies $B/4 < a_j < B/2$. \\
Question: 
Can $A$ be partitioned into $m$ disjoint sets $A^{(1)}$, $A^{(2)}$, 
$\dots$, $A^{(m)}$ 
such that $\sum A^{(i)} =B$ for every $i$ with $1 \le i \le m$?
\end{quote}

%
%

\noindent
It is well-known that {\sf 3-PARTITION} is strongly \NP-complete, i.e., 
it is \NP-hard even if the input is written in unary notation \cite{GJ79}.
Our reductions are based on a similar reduction of
Umesato et al.~\cite{UmesatoSaitohUeharaItoOkamoto2013}.

\begin{theorem}
\label{th:NP}
The {\sf MinHeight} problem for paper folding height is \NP-complete.
\end{theorem}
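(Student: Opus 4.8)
The plan is to prove membership in NP and then NP-hardness by a reduction from the strongly \NP-complete problem {\sf 3-PARTITION}. For membership, observe that a leveled folded state is a certificate of polynomial size: it assigns each of the $n+1$ segments a level in $\{1,\dots,n+1\}$, and one checks in polynomial time that the induced vertical order is transitive, respects the mountain--valley string $s$, is non-crossing (conditions 1--2 of the folded-state definition), and uses at most $k$ levels. Hence {\sf MinHeight} $\in \NP$.

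For hardness, given a {\sf 3-PARTITION} instance $A=\{a_1,\dots,a_{3m}\}$ with target sum $B$, I would build a paper strip whose folded states are forced to consist of $m$ horizontally disjoint \emph{bins} together with $3m$ rigid \emph{item} gadgets, where the item for $a_j$ is a spiral folding (Observation~\ref{obs:spiral}) of size $\Theta(a_j)$, so that its internal structure is forced and it contributes height proportional to $a_j$ wherever it is placed. The frame connecting the bins and items is designed so that (i) each item must be folded into exactly one of the $m$ bins, with the choice of bin free; (ii) the items routed into a common bin stack on top of one another, so the height used by that bin is a baseline plus the sum of the corresponding $a_j$; and (iii) distinct bins occupy disjoint horizontal intervals, so their stacks may reuse the same levels and the total height of the folding equals the baseline plus the maximum bin load. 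All segment lengths are chosen as integers polynomial in the (unary) input values, so the reduction is polynomial and yields strong \NP-hardness.

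With the target set to $k = (\text{baseline}) + B$, correctness follows in both directions. If $A$ admits a partition $A^{(1)},\dots,A^{(m)}$, then folding the items of $A^{(i)}$ into bin $i$ makes every bin load equal to $\sum A^{(i)} = B$; since the bins are horizontally disjoint their stacks share levels, giving total height exactly $k$. Conversely, from any folding of height at most $k$, the rigidity of the spiral gadgets (forced by Observation~\ref{obs:spiral} together with the non-crossing and mountain--valley constraints) guarantees that each item stays intact inside a single bin; the height used by bin $i$ is at least the baseline plus its load, so every bin load is at most $B$; as the loads are nonnegative integers summing to $mB$, each equals exactly $B$, recovering a valid $3$-partition (using $B/4 < a_j < B/2$ to force exactly three items per bin).

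The main obstacle is the design and analysis of the frame gadget realizing item-to-bin assignment. Because every crease is folded by $180^\circ$, the horizontal position of each segment is determined by the segment lengths alone, so the ``choice of bin'' cannot be a horizontal repositioning; it must instead be encoded in the remaining freedom, namely the overlap order among segments that share a horizontal interval, subject to transitivity, non-crossing, and the mountain--valley string. The crux is therefore to engineer the overlaps so that (a) each item's size-$\Theta(a_j)$ contribution can be attached, as a unit, to the chain of any one chosen bin while remaining incomparable to the other bins' stacks, so that the global height is the baseline plus the \emph{maximum}, rather than the sum, of bin loads; and (b) no alternative folding can undercut this value, for instance by splitting an item across bins or by exploiting the longer end segments of a spiral. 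Establishing (a) and (b) rigorously---in effect reducing {\sf MinHeight} to minimizing the makespan of a scheduling-type assignment---is the heart of the proof, while the remaining bookkeeping (segment lengths, the baseline height, and the total size of the construction) is routine.
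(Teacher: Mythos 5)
Your NP-membership argument and the choice of {\sf 3-PARTITION} with spiral gadgets for the items match the paper, but your bin architecture has a genuine flaw that the paper's construction is specifically designed to avoid. You require simultaneously that (iii) the $m$ bins occupy \emph{disjoint horizontal intervals}, so that the height is the baseline plus the \emph{maximum} bin load, and that (i)/(a) each item can be attached ``as a unit'' to \emph{any chosen} bin. These two requirements are incompatible. As you yourself note, the horizontal placement of every segment is forced by the segment lengths, so the set of bins that a given spiral overlaps horizontally is fixed at construction time, independent of the folding. If a spiral horizontally overlaps only bin $i$, its height is always charged to bin $i$ and there is no assignment freedom; if it horizontally spans several bins, then at every horizontal point inside each such bin the spiral's segments must occupy levels distinct from that bin's stack, so its height is charged to \emph{all} of those bins, and the ``max of bin loads'' degenerates toward a sum over all items. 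There is no folding freedom (overlap order, nesting, incomparability) that can remove a horizontally overlapping spiral from a bin's column: segments that share a horizontal point must lie on distinct levels, full stop. So the scheduling/makespan objective you aim for cannot be realized by horizontally disjoint bins, and the ``heart of the proof'' you defer is not just hard but impossible in this form.

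The paper resolves exactly this tension by inverting the geometry: the $m$ \emph{folders} (bins) all overlap horizontally and are stacked \emph{vertically}, so the total height is the \emph{sum} of the folder heights, and the assignment of a spiral to a folder is a choice of which folder's level range to nest it into---precisely the kind of freedom that 1D folding does offer (via the $2m$ routing pleats flanking each spiral). The max-load behavior is then recovered per folder: each folder contains its own pleat of $2Bm^2+12m$ unit-length segments placed at horizontal positions disjoint from where the spirals are inserted, so the folder's height is the maximum of this fixed capacity and its load, and the height budget $2Bm^3+12m^2+2m$ (the sum of the minimum folder heights) forces every folder's spiral load to be at most $2Bm^2$, i.e., at most $B$ per folder, which with $B/4<a_j<B/2$ yields the 3-partition. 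If you want to salvage your write-up, you should replace your horizontally-disjoint-bins frame with this vertically-stacked-folders-with-internal-capacity frame; the rest of your outline (spirals as items, the counting in both directions, strong NP-hardness from unary sizes) then goes through essentially as in the paper.
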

\begin{proof}
It is easy to see that the problem is in \NP.
To prove hardness, we reduce from {\sf 3-PARTITION}. 

\begin{figure}
  \centering
  \includegraphics[width=\textwidth]{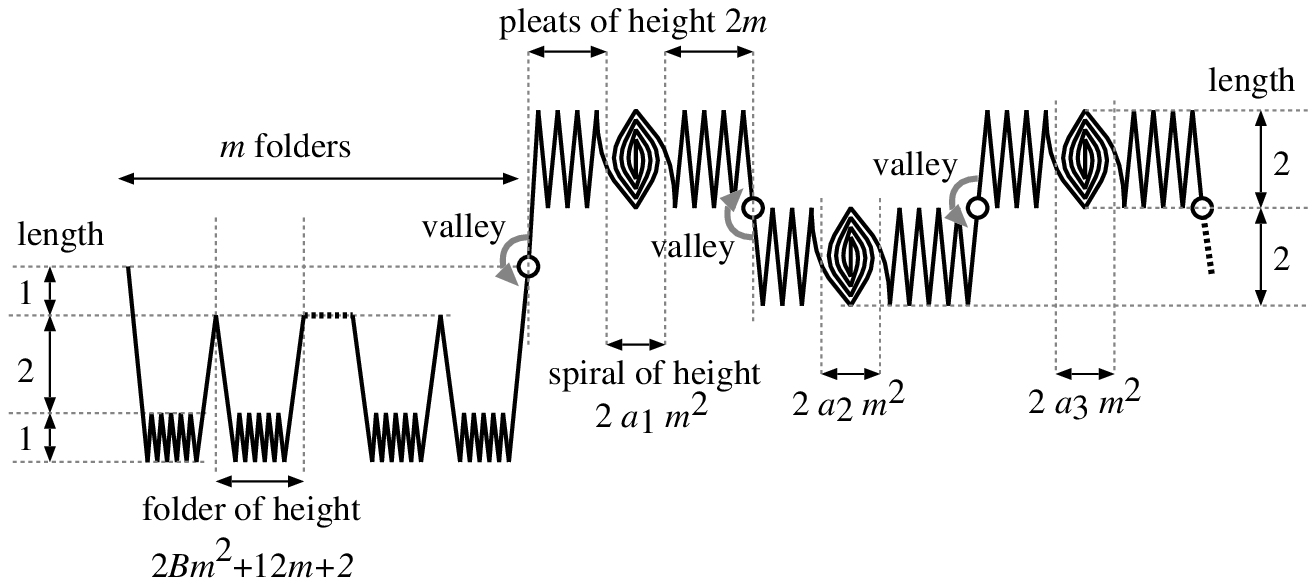}
  \caption{Outline of the reduction for Theorem~\ref{th:NP}.  Note that this figure and the next one are sideways compared to previous figures, so height is horizontal.}
  \label{fig:outline}
\end{figure}

Given an instance $\{a_1,a_2,\dots,a_{3m}\}$ of {\sf 3-PARTITION},
we construct a corresponding paper strip $P$ as follows
(Fig.~\ref{fig:outline}).
The left part of $P$ is folded into $m$ \emph{folders}, where
each folder is a pleat consisting of $2 B m^2 + 12 m$ \emph{short} segments of
length $1$ between two segments of length~$3$, except for the very first and
last long segments, which have length~$4$.%
\footnote{In the reduction in \cite{UmesatoSaitohUeharaItoOkamoto2013},
  this folder consists of just two segments.}
The right part of $P$ contains $3 m$ gadgets,
where the $i$th gadget represents the integer~$a_i$.
The $i$th gadget consists of 
one spiral of height $2 a_i m^2$ between two $2 m$ pleats.
Each line segment in the gadget has length $2$
except for the one end segment which has length $3$.
This construction can be carried out in polynomial time.

By Observation~\ref{obs:spiral}, each spiral folds uniquely, 
and also we know that each pleat folds uniquely 
\cite{Uehara2010b,Uehara2011b}. 
Therefore, the folders and gadgets fold uniquely.
Figure~\ref{fig:outline} shows the unique combination of these foldings
before folding at the \emph{joints}, depicted by white circles.
Once the joints are valley folded, the folding will no longer be unique.
%

The intuition is that the pleats of each gadget give us the freedom to place
the spiral of each gadget in any folder.  The heights of the spirals ensure
that the packing of spirals into folders acts like {\sf 3-PARTITION}.  
More precisely, we show:

\begin{claim}
\label{clm:reduction}
An instance $(A,B)$ of {\sf 3-PARTITION} has a solution if and only if
the paper strip $P$ can be folded with height at most $2 B m^3+12m^2+2m$.
\end{claim}

\begin{figure}
  \centering
  \includegraphics[width=0.8\textwidth]{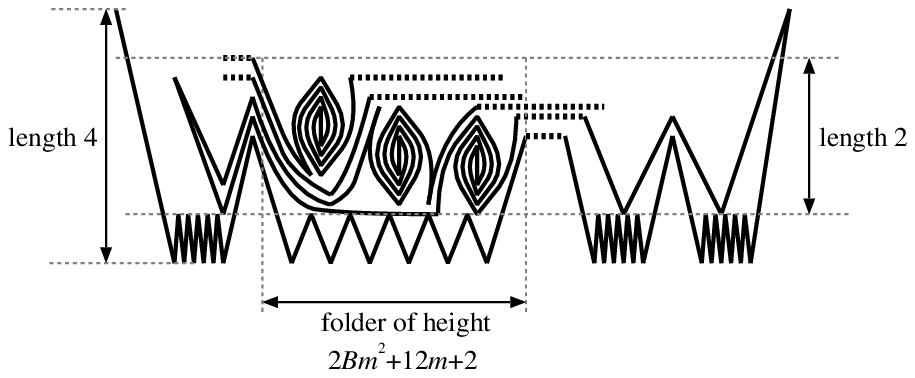}
  \caption{Putting spirals into a folder.}
  \label{fig:folder}
\end{figure}


To prove the claim,
first suppose that the {\sf 3-PARTITION} instance $\{a_1$, $a_2$, 
$\ldots$, $a_{3m}\}$
has a solution, say, $A^{(1)},A^{(2)},\dots,A^{(m)}$.
Then we have $A^{(i)} \subset A$, $\msize{A^{(i)}}=3$, 
$\sum A^{(i)} = B$ for each $i$ in $\{1,2,\dots,m\}$,
and $A=\dot{\bigcup}_{i=1}^{m} A^{(i)}$.
For the three items in $A^{(i)}$, we put the three corresponding spirals
into the $i$th folder; see Fig.~\ref{fig:folder}.
Because the items sum to $B$, the total height of the spirals is $2 B m^2$.
Each gadget uses $2(m-1)$ of the $4m$ total pleats to position its spiral,
leaving $2(m+1)$ pleats which we put in the folder of the spiral,
for a total of $6(m+1)$.
The $3m-3$ other gadgets also place two pleats in this spiral,
just passing through, for a total of $6m-6$.
Thus each folder has at most $2 B m^2 + 12 m$ layers added and,
because it already had $2 B m^2+12 m$ short pleat segments,
its final height is $2 B m^2+12 m+2$ (including the two long segments).
Therefore the total height of the folded state is $2 B m^3+12 m^2+2m$ as desired.

Next suppose that the paper strip $P$ can be folded with height at most
$k=2 B m^3+12m^2+2m$.
There are $m$ folders each with height at least $2 B m^2 + 12 m + 2$.
Therefore, each folder must have height exactly $2 B m^2 + 12 m + 2$
and the number of levels inside the folder is $2 B m^2 + 12 m$.
Furthermore, the spirals must be folded into the folders.
We claim that the spirals in each folder must have total height
at most $2 B m^2$. For,
if the spirals in one of the folders have total height more than $2 B m^2$, then
they have height at least $2(B+1) m^2 = 2 B m^2 + 2 m^2$,
which is greater than $2 B m^2 + 12 m$ if $2 m^2 > 12 m$, i.e., if $m > 6$
(which we may assume without loss of generality). 
In particular, each folder must have at most three spirals:
because each $a_j$ is greater than $B/4$, 
each spiral has height larger than $B m^2/2$, 
so four spirals would have height larger than $2 B m^2$.
Because the $3 m$ spirals are partitioned among $m$ folders,
exactly three spirals are placed in each folder,
and their total height of at most $2 B m^2$ corresponds to three elements
of sum at most (and thus exactly)~$B$.
Therefore we can construct a solution to the {\sf 3-PARTITION} instance.
\qed
\end{proof}

\remove{
\paragraph{Total Crease Width:}
As mentioned in the introduction, the notion of the original total crease width introduced 
in \cite{UmesatoSaitohUeharaItoOkamoto2013} does not measure 
the amount of paper needed for real folding when the creases are not equally spaced.
We extend it to count the number of paper layers for each pair of segments.
Precisely, the {\em total crease width} of a folded state is defined as follows.
First, we define a function $\delta(S_i,S_j)$ by $\delta(S_i,S_j)=1$ if they share 
a common point in the folded state, or $\delta(S_i,S_j)=0$ otherwise.
Recall that $h(S_i)$ is the height of $S_i$ in the folded state.
Now the total crease width of a folded state is defined by 
\[
\sum_{S_i,S_j}\delta(S_i,S_j)\left|h(S_i)-h(S_j)\right|.
\]
Intuitively, it is the total length of paper needed for all the creases.
This definition is a natural extension of the original one, and we define the \Sum{} problem,
with the same inputs as the \Height{} problem, but asking whether a total crease width of at most $k$
is achievable.\footnote{When we consider total crease width,
the value of $k$ is different from the value of the maximum height.
We omit the details, but they are not difficult to state.}
}

\begin{theorem}
The {\sf MinSumCW} problem is \NP-complete. 
\end{theorem}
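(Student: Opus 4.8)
The plan is to reuse the reduction from {\sf 3-PARTITION} constructed in the proof of Theorem~\ref{th:NP}, adapting the accounting from height to total crease width. The same paper strip $P$ — with $m$ folders on the left and $3m$ spiral-plus-pleat gadgets on the right — already forces, via Observation~\ref{obs:spiral} and the uniqueness of pleat foldings, that each folder and each gadget folds rigidly up to the valley folds at the joints. The only freedom remains the assignment of spirals to folders. So the heart of the argument is to show that {\em total crease width} is a faithful proxy for the packing quality, in the sense that it is minimized exactly when each folder receives spirals summing to $B$, i.e., exactly when the {\sf 3-PARTITION} instance has a solution.

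First I would observe that membership in \NP{} is immediate: a folded state (overlap order) is a polynomial-size certificate, and by the remark in the preliminaries the best corresponding total crease width is computable via linear programming, hence in polynomial time. For hardness, I would keep the identical construction and compute the total crease width contributed by each structural piece. The crease widths internal to a folder (its own pleat) and internal to a spiral are fixed constants independent of the packing, since those subassemblies fold uniquely; these contribute a fixed additive baseline. The {\em variable} part of the total crease width comes from the joints and from the pleat ``tails'' of each gadget that must reach from the gadget's home position into whichever folder holds its spiral. The key geometric point — exactly as in the height argument — is that routing a spiral of height $2a_i m^2$ into a folder, together with the two pleat segments of every other gadget that merely pass through that folder, contributes crease width proportional to how deeply the spiral must be inserted, which in turn depends on the total height of spirals already packed into that folder.

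The crux, and the step I expect to be the main obstacle, is choosing the target value $k$ and proving the two-sided bound: that a balanced packing (each folder summing to $B$) achieves total crease width at most $k$, while any unbalanced packing strictly exceeds $k$. In the height proof the threshold came from a clean pigeonhole: $m$ folders each of forced minimum height, so any overflow in one folder pushes the maximum over $2Bm^2+12m$. For total crease width the argument is additive rather than max-based, so I would instead argue a convexity/rearrangement inequality: the summed crease width of spirals stacked in a common folder grows superlinearly in their combined height (each additional spiral must wrap around all the layers beneath it, contributing crease width proportional to the accumulated height), so the total is minimized precisely when the spiral heights are distributed as evenly as possible across the $m$ folders. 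Combined with the constraint that each $a_j \in (B/4, B/2)$ forces exactly three spirals per folder, the even distribution is attainable if and only if each triple sums to exactly $B$. I would set $k$ equal to the fixed baseline plus the total crease width of this balanced configuration, and then verify that any deviation from a sum-$B$ partition forces at least one folder to hold spirals of total height $\geq 2Bm^2 + 2m^2$, whose extra wrapping strictly increases the summed crease width beyond what the polynomial slack in $k$ allows. The delicate part is bounding the ``pass-through'' pleat contributions tightly enough that this superlinear penalty is not swamped by lower-order terms; the strong \NP-completeness of {\sf 3-PARTITION} (unary input) is what lets all these quantities stay polynomially bounded so that the reduction runs in polynomial time.
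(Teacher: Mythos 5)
Your plan keeps the \textsf{MinHeight} construction unchanged and tries to prove that the balanced packing minimizes \emph{total} crease width via a convexity/rearrangement argument. That is precisely the step that fails, and it is why the paper does \emph{not} reuse the construction as-is. Total crease width is a sum, and the unmodified construction admits a cheating fold that trades your stacking penalty against a larger \emph{routing} saving. Concretely, in the intended folding every gadget's pleats traverse all $m$ folders (this is where the $2(m-1)$ positioning pleats and the two pass-through layers per folder in the height proof come from), and the creases of those pleats wrap around large portions of every folder traversed; this traversal cost is of order $Bm^3$ per gadget, hence $\Theta(Bm^4)$ overall. But nothing forces a gadget to traverse folders beyond the one holding its spiral: a folding that dumps all $3m$ spirals into the folder nearest the joints lets each gadget wrap around only that one (overstuffed) folder, eliminating most of the traversal cost. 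Meanwhile the penalty you invoke is weaker than claimed: spirals placed in a folder do not wrap around one another (each folds uniquely and is interleaved with the folder's own layers); the folder's own crease widths sum to exactly the number of layers inserted into it, a quantity whose total over all folders is the \emph{same for every packing} (so your assertion that folder-internal crease widths are packing-independent constants is also not right, but their sum is, which is worse for you); and the genuinely superlinear term --- gadget pleats wrapping around earlier spirals stacked in the same folder --- is of the same order as, and with any reasonable accounting smaller than, the traversal saving. So the unbalanced fold can have strictly smaller total crease width than the balanced one, and no threshold $k$ makes your two-sided bound true. Your own remark that the ``delicate part is bounding the pass-through pleat contributions'' points at exactly this hole; those contributions are not a lower-order nuisance but the dominant term, with the wrong sign.

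The paper's proof closes this hole with a device your proposal is missing: every gadget is augmented with a deep ``molar'' (a pleat block of roughly $2z$ layers, $z=m^4$) at each end, and the folder block gets matching ``gums'' at its two extremes. A capacity argument forces each left molar into the left gum and each right molar into the right gum: putting both molars of one gadget into the same gum raises that gum's height by about $2z-4m$, which translates into an equal and dominating increase in total crease width. Once anchored at both extremes, every gadget necessarily spans the entire folder stack in any candidate optimal folding, so its total crease width differs from its height by a fixed number (its crease count), and its height equals its spiral's height plus the sum of all folder heights. Only then does the objective collapse to minimizing the sum of folder heights, each floor-bounded by its own pleats, and only then does the pigeonhole step (an unbalanced packing pushes some folder at least $2m^2-12m$ above its floor) carry over from Theorem~\ref{th:NP}. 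Without the molar/gum anchoring, or some equivalent mechanism making each gadget's routing cost independent of where its spiral goes, the reduction does not work.
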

\begin{proof}
This reduction from {\sf 3-PARTITION} 
is a modification to the reduction to {\sf MinHeight} 
in the proof of Theorem~\ref{th:NP}; 
refer to Figures~\ref{fig:outline2} and~\ref{fig:folder2}.
We introduce a deep ``molar'' at both ends of each gadget,
which must fit into deep ``gums'' at either end of the folders.
Specifically, for $z=m^4$,
each gum has $2z+4m$ pleats, and each molar
in the $i$th gadget has $2z+4(m-i)$ pleats.
In the intended folded state, the left molars nest inside each other
(smaller/later inside larger/earlier) within the left gum, and similarly
for the right molars into the right gum.
In this case, every molar and every gum remains
at its minimum possible height given by its pleats.

\begin{figure}[htb]
  \centering
  \includegraphics[angle=270,width=\textwidth]{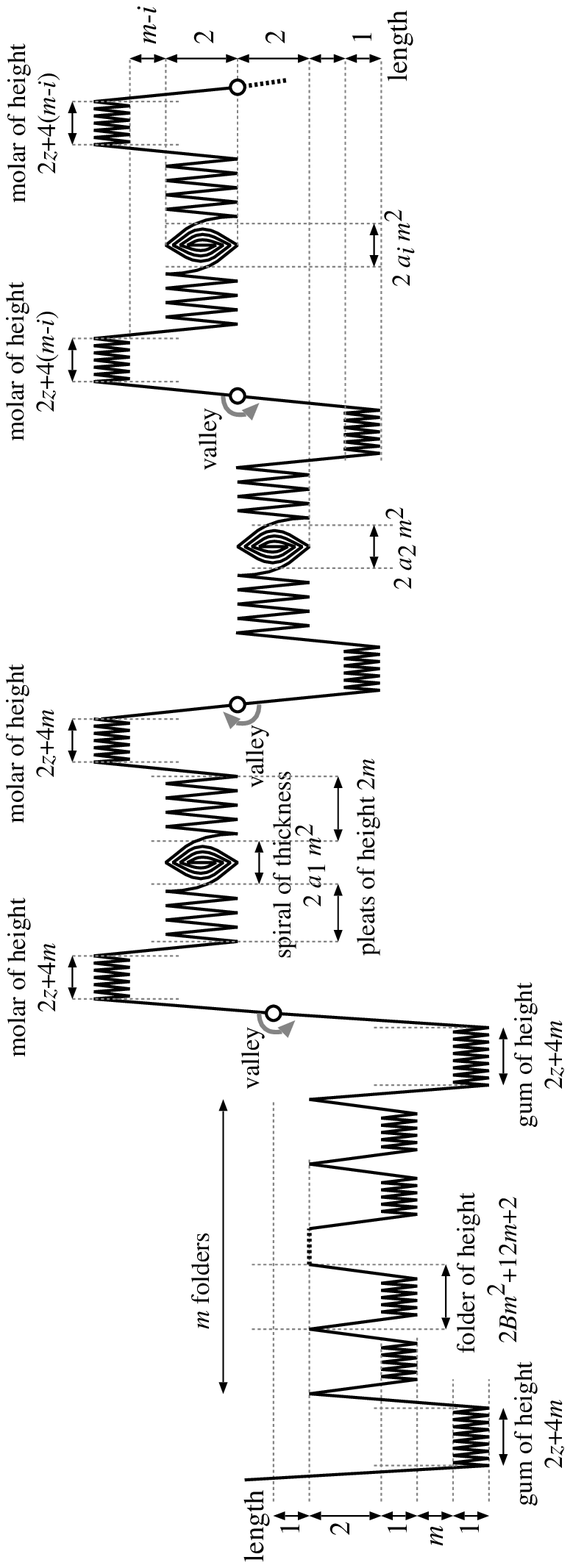}
  \caption{Outline of the reduction.  Note that height is vertical.}
  \label{fig:outline2}
\end{figure}

\begin{figure}
  \centering
  \includegraphics[width=\textwidth]{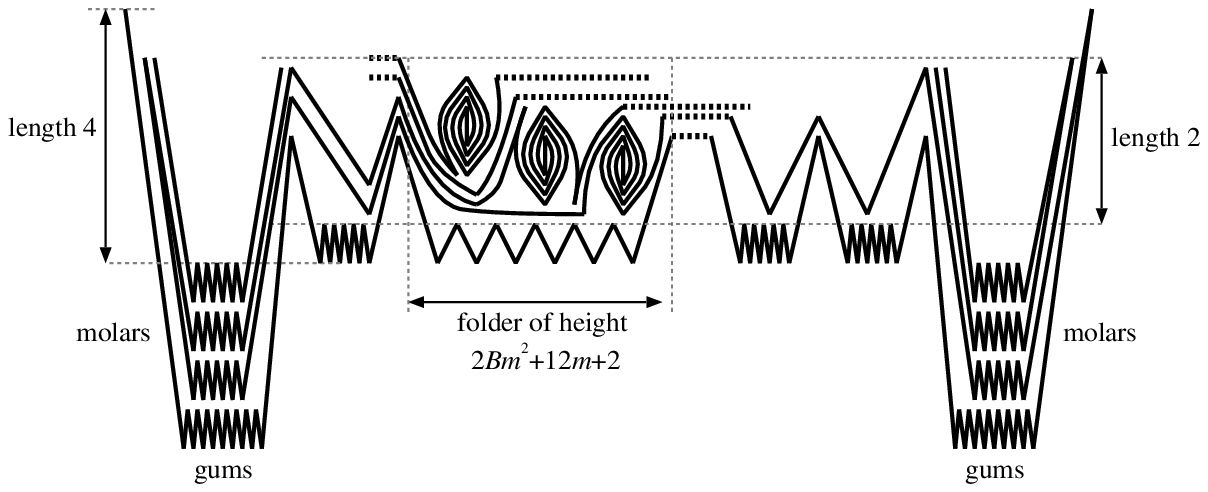}
  \caption{Putting spirals into folders and molars into gums.}
  \label{fig:folder2}
\end{figure}

The heights of the molars guarantee that, in any legal folding, every
molar ends up in a gum.
If, in any of the $m$ gadgets, the right molar folds into the
left gum, then the left molar of that gadget also folds into the
left gum, so the left gum has height at least $4 z$ in the folded state,
$2 z - 4 m$ more than its minimum height.
This increase in height translates into an equal increase in the total
crease width (because the number of creases remains fixed).
Because $z = m^4$, this increase will dominate the total crease width.
Therefore every folding with a right molar in the left gum,
or with a left molar in the right gum, has
total crease width larger than the intended folded state.

This argument guarantees that, in any solution folding to the \Sum{}
instance, each gadget has its left molar in the left gum and its right
molar in the right gum.  In this case, the height of each gadget is the
height of its spiral plus the height of all the folders, which will be
minimized precisely when the folders do not grow in height.
The total crease width of a gadget differs from its height by a fixed amount
(the number of creases), so we arrive at the same minimization problem.
Thus the proof reduces to the {\sf MinHeight} construction.
\qed
\end{proof}

\section{Fixed-parameter tractability}
\label{Fixed-parameter tractability}

In this section, we show the following theorem. 


\begin{theorem}
Testing whether a strip with $n$ folds has a folded state with height 
at most $k$ can be done in time $O(2^{O(k\log k)}n)$. 
\end{theorem}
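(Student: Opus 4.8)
The plan is to design a left-to-right dynamic program over the creases that scans the strip one segment at a time and, at each step, maintains only a bounded summary of the folded state so far. The central observation is that if the final folded state has height at most $k$, then at every $x$-coordinate the layers stack into at most $k$ levels, so the ``active'' structure at the current scan position—the vertical ordering of the at most $k$ paper layers crossing a vertical line just to the right of the current crease—can be described by a permutation of at most $k$ occupied levels. First I would formalize the notion of a \emph{frontier}: as we sweep a vertical line rightward, the segments it crosses, listed from bottom level to top level, form a sequence; since there are at most $k$ levels, there are at most $k!$ such orderings. Crucially, what matters for extending the folding is not the identity of past segments but only this frontier ordering together with which levels are currently occupied and a small amount of bookkeeping needed to enforce the non-crossing and transitivity conditions from the folded-state definition.

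The key steps, in order, are as follows. First, process creases in order of increasing $x$-coordinate; at a crease $p_i$ the strip reverses horizontal direction, and the segment $S_i$ folds over to become $S_{i+1}$ going the opposite way, at a level determined by the mountain/valley assignment (the new segment must be immediately above or immediately below, in the folded stack, relative to $S_i$, in the direction dictated by $s_i$). Second, I would define the DP state to be the current frontier permutation of occupied levels (an assignment of which of the $k$ levels are currently covered and their top-to-bottom relationship relative to the incoming segment), which by the height bound is one of at most $2^{O(k\log k)}$ possibilities. Third, the transition at each crease checks, in time $O(\mathrm{poly}(k))$ or $O(2^{O(k\log k)})$, which new level placements are consistent: the fold must respect the M/V direction, must not cause a crossing (condition~1 of the folded-state definition, forbidding a segment that spans the crease from separating the two joined segments), and must not create an interleaving of two creases at the same location (condition~2). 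Fourth, I would handle the horizontal geometry: because segments have prescribed lengths and the strip zig-zags, the $x$-extent of each segment is determined, and I would process the sweep at the $O(n)$ distinct crease-endpoint events, updating the frontier as segments begin and end. The answer is ``yes'' iff some sequence of transitions consumes all $n+1$ segments while never exceeding $k$ levels.

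The main obstacle I anticipate is verifying that the bounded frontier genuinely captures everything needed for the non-crossing and transitivity constraints, so that two partial foldings with the same frontier summary are interchangeable for all future decisions. In particular, conditions~1 and~2 refer to segments that span a crease, and a segment placed far in the past can still be crossed by a future crease if it remains part of the active stack; I must argue that such a segment's relevant information is fully encoded in its current level within the frontier, and that no ``hidden'' past segment—one not currently on the frontier—can ever again interact with a future crease. This should follow because once the sweep line passes the right endpoint of a segment, that segment can never again lie over a future crease, and so only the at most $k$ currently-active segments, together with their level ordering, constrain future folds. I would also need to confirm that the minimum height realizable from a given frontier is correctly propagated; since we are testing a fixed height bound $k$ rather than minimizing, it suffices to track feasibility within $k$ levels, which keeps the state space at $2^{O(k\log k)}$ and yields the claimed $O(2^{O(k\log k)} n)$ running time after multiplying the per-event transition cost by the $O(n)$ events.
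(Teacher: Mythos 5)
Your proposal follows essentially the same route as the paper's proof: a left-to-right sweep over the line onto which the strip is folded, with DP states that record which of the $k$ levels are occupied by the segments crossing the sweep line and in what order (i.e., a level assignment), transitions at the $O(n)$ crease/endpoint events that check the local non-crossing conditions, and the observation that a segment whose right endpoint has been passed can never again constrain a future crease --- which is exactly the (implicit) justification that this bounded frontier is a sufficient summary. This yields the same $2^{O(k\log k)}$ state bound and $O(2^{O(k\log k)}n)$ running time.

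One assertion in your write-up is wrong, however, and must be removed: in your first step you claim that at a crease the new segment ``must be immediately above or immediately below'' $S_i$ in the folded stack. The mountain-valley letter fixes only the \emph{direction} (above versus below), not adjacency: other layers may lie between the two segments joined at a crease, and this is precisely what the paper's notion of crease width measures. For a concrete failure, the string $M^nV^n$ with the stated segment lengths has a \emph{unique} folded state (the spiral), in which the segments $S_0$ and $S_1$ joined at the first crease are separated by almost all of the other layers; a DP whose transitions enforce adjacency would reject this pattern at every height $k$, so the algorithm would be incomplete. Fortunately, your third step supersedes the offending parenthetical: there you correctly enumerate all placements of the new segment that respect the fold direction, condition~1 (no spanning segment separates the two joined segments), and condition~2 (creases at the same location are nested or disjoint), which matches the paper's four transition tests. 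With the adjacency claim deleted and the transitions implemented as in your third step, the proposal is correct.
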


\begin{proof}
We use a dynamic programming algorithm that sweeps from left to right across the line onto which the
 strip is folded, stopping at each of the points on the line where a strip endpoint or a crease (fold point) is placed. At each point of the line between two stopping points, there can be at most $k$
segments of the strip,
for otherwise the height would necessarily be larger than $k$ and we could terminate the algorithm,
 returning that the height is not less than or equal to $k$.
We define a \emph{level assignment} for a point $p$ between two stopping points
to be a function $a$ from input segments that overlap $p$ to distinct integer levels from $1$ to $k$. The number of possible level assignments for any point is therefore at most $k^{k}$.

Let $\epsilon>0$ be smaller than the distance between any two stopping points.
At each stopping point $p$ of the algorithm, we will have a set $A$ of allowed level assignments $a^-$ for the point $p-\epsilon$; initially (for the leftmost point of the folded input strip) $A$ will contain the unique level assignment for the empty set of segments.
For each combination of a level assignment $a^-$ in $A$ for the point $p-\epsilon$ and an arbitrary level assignment $a^+$ for the point $p+\epsilon$, we check whether there is a valid folding of the part of the strip between $p-\epsilon$ and $p+\epsilon$ that matches this level assignment. To do so, we check the following four conditions that capture the \emph{noncrossing} conditions defined in Section~\ref{sec:preliminaries}:
\begin{itemize}
\item If a segment $s$ extends to both sides of $p$ without being folded at $p$, then it has the same level on both sides. That is, $a^-(s)=a^+(s)$.
\item For each two input folds at $p$ that connect pairs of segments that overlap $p-\epsilon$, the levels of these pairs of segments are nested or disjoint. That is, if we have a fold connecting segments $s_0$ and $s_1$, and another fold connecting segments $s_2$ and $s_3$, then $[a^-(s_0),a^-(s_1)]$ and $[a^-(s_2),a^-(s_3)]$ are either disjoint intervals or one of these two intervals contains the other.
\item For each two input folds at $p$ that connect pairs of segments that overlap $p+\epsilon$, the levels of these pairs of segments are nested or disjoint. This is a symmetric condition to the previous one, using $a^+$ instead of $a^-$.
\item For each fold at $p$, connecting segments $s_0$ and $s_1$, and for each input segment $s_2$ that crosses $p$ without being folded there, the interval of levels occupied by the fold should not contain the level of $s_2$. That is, if the two segments $s_0$ and $s_1$ extend to the left of $p$, then the interval $[a^-(s_0),a^-(s_1)]$ should not contain $a^-(s_2)$.  If the two segments extend to the right of $p$, then we have the same condition using $a^+$ instead of $a^-$.
\end{itemize}
If the pair $(a^-,a^+)$ passes all these tests, we include $a^+$ in the set of valid level assignments for $p+\epsilon$, which we will then use at the next stopping point of the algorithm.

If, at the end of this process, we reach the rightmost stopping point with a nonempty set of valid level assignments (necessarily consisting of the unique level assignment for the empty set of segments) then a folding of height $k$ exists. The folding itself may be recovered by storing, for each level assignment $a^+$ considered by the algorithm, one of the level assignments $a^-$ such that $a^-\in A$ and $(a^-,a^+)$ passed all the tests above. Then, backtracking through these pointers, from the rightmost stopping point back to the leftmost one, will give a sequence of level assignments such that each consecutive pair is valid, which describes a consistent folding of the entire input strip.

The time for the algorithm is the number of stopping points multiplied by the number of pairs of level assignments for each stopping point and the time to test each pair of level assignments. This is $O(2^{O(k\log k)}n)$, as stated.\qed
\end{proof}

\section{Conclusion}

In this paper, we considered three problems {\sf MinHeight}, {\sf MinMaxCW} 
and {\sf MinSumCW} for 1D strip folding, and showed some intractable results. 
We have some interesting open questions. 
Although we gave an FPT algorithm for {\sf MinHeight}, 
it is not clear if the other two problems have FPT algorithms. 
Extending our models to 2D foldings would also be interesting.

{\raggedright
\bibliographystyle{abbrv}
\bibliography{thick}}

\begin{thebibliography}{1}

\bibitem{Arkin-Bender-Demaine-Demaine-Mitchell-Sethia-Skiena-2003}
E.~M. Arkin, M.~A. Bender, E.~D. Demaine, M.~L. Demaine, J.~S.~B. Mitchell,
  S.~Sethia, and S.~S. Skiena.
\newblock When can you fold a map?
\newblock {\em Computational Geometry: Theory and Applications}, 29(1):23--46,
  2004.

\bibitem{BannisterEppsteinSimons2012}
M.~J. Bannister, D.~Eppstein, and J.~A. Simons.
\newblock Inapproximability of orthogonal compaction.
\newblock {\em Journal of Graph Algorithms and Applications}, 16:651--673,
  2012.

\bibitem{Bern-Hayes-1996}
M.~Bern and B.~Hayes.
\newblock The complexity of flat origami.
\newblock In {\em Proceedings of the 7th Annual ACM-SIAM Symposium on Discrete
  Algorithms}, pages 175--183, 1996.

\bibitem{DemaineORourke2007}
E.~D. Demaine and J.~O'Rourke.
\newblock {\em {Geometric Folding Algorithms: Linkages, Origami, Polyhedra}}.
\newblock Cambridge University Press, 2007.

\bibitem{Gallivan}
B.~Gallivan.
\newblock Folding paper in half 12 times: {An `Impossible Challenge' Solved and
  Explained}.
\newblock Historical Society of Pomona Valley, 2002.

\bibitem{GJ79}
M.~R. Garey and D.~S. Johnson.
\newblock {\em Computers and Intractability: A Guide to the Theory of
  NP-Completeness}.
\newblock W. H. Freeman \& Co., 1979.

\bibitem{Uehara2010b}
R.~Uehara.
\newblock On stretch minimization problem on unit strip paper.
\newblock In {\em 22nd Canadian Conference on Computational Geometry (CCCG)},
  pages 223--226, 2010.

\bibitem{Uehara2011b}
R.~Uehara.
\newblock Stamp foldings with a given mountain-valley assignment.
\newblock In {\em Origami$^5$: Proceedings of the 5th International Meeting of
  Origami Science, Mathematics, and Education (AK Peters/CRC Press, 2011)},
  pages 585--597, 2011.

\bibitem{UmesatoSaitohUeharaItoOkamoto2013}
T.~Umesato, T.~Saitoh, R.~Uehara, H.~Ito, and Y.~Okamoto.
\newblock The complexity of the stamp folding problem.
\newblock {\em Theoretical Computer Science}, 497:13--19, 2013.

\end{thebibliography}


\end{document}